\documentclass[runningheads]{llncs}

\usepackage{graphicx}
\usepackage{color}
\usepackage{xcolor}
\usepackage{amsfonts}
\usepackage{amsmath}
\usepackage[mathscr]{euscript}
\usepackage{algorithm} 
\usepackage{algorithmicx}
\usepackage[noend]{algpseudocode}
\algrenewcommand\algorithmicrequire{\textbf{Input:}}
\algrenewcommand\algorithmicensure{\textbf{Output:}}
\usepackage{hyperref}

\definecolor{orcidgreen}{RGB}{166,206,57}
\newcommand{\orcidicon}{\includegraphics[width=0.20cm]{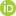}}
\def\orcidID#1{\renewcommand{\thefootnote}{\roman{footnote}}
  \unskip$^{\orcidicon}$\footnote{\orcidicon\color{orcidgreen}\,\scriptsize #1\color{black}\vspace*{-2pt}}\unskip
  \renewcommand*{\thefootnote}{\arabic{footnote}}\unskip
}
\DeclareMathOperator{\parent}{par}
\DeclareMathOperator{\child}{chld}

\DeclareMathOperator{\LCA}{LCA}
\DeclareMathOperator{\lca}{lca}
\DeclareMathOperator{\indeg}{indeg}
\newcommand{\coloneqq}{\mathrel{:=}}
\newcommand{\GG}{\ensuremath{\mathscr{G}}}

\begin{document}

\author{%
  \setcounter{footnote}{0}
  Patricia A.\ Ebert\inst{1}
  \fnmsep\orcidID{0009-0009-3361-7000}\and
  Peter F.\ Stadler\inst{2-6}
  \fnmsep\orcidID{0000-0002-5016-5191}\and 
  Marc Hellmuth\inst{7}
  \fnmsep\orcidID{0000-0002-1620-5508}
}

\institute{Department of Mathematics, Faculty of Science,
  Stockholm University, SE-106 91 Stockholm, Sweden,
  \email{patricia.ebert@math.su.se} 
  \and 
  Bioinformatics Group, Department of Computer Science, and
  Interdisciplinary Center for Bioinformatics, Leipzig University,
  H{\"a}rtelstrasse 16-18, D-04107 Leipzig, Germany,
  \email{studla@bioinf.uni-leipzig.de}
  \and 
  Institute for Theoretical Chemistry, University of Vienna, Vienna,
  Austria
  \and
  Max Planck
  Institute for Mathematics in the Sciences, Leipzig, Germany
  \and
  Facultad de Ciencias, Universidad Nacional de Colombia,
  Bogot{\'a}, Colombia
  \and
  Santa Fe Institute, Santa Fe, NM
  \and
  Theoretical Computer Science Group, Faculty of Mathematics and
  Computer Science, Leipzig University,
  Augustusplatz 10, D-04109 Leipzig, Germany,
  \email{marc.hellmuth@uni-leipzig.de}
}

\title{Best Matches in Phylogenetic Networks}

\authorrunning{Ebert, Stadler \& Hellmuth}
\titlerunning{Best Matches in Phylogenetic Networks}

\maketitle              % typeset the header of the contribution
\begin{abstract}
  Best match graphs (BMGs) were introduced in mathematical phylogenetics to
  describe the concept of closest relatives for related genes (leaves of
  rooted tree) in different organisms (defining leaf colors). We generalize
  this concept here to leaf-colored rooted networks, where least common
  ancestors are in general neither unique nor comparable. We characterize BMGs 
  of rooted networks as those vertex-colored digraphs that are
  properly colored and satisfy an easy-to-check condition that we call the
  sicor-in-hub property. BMGs can be recognized in linear time and an
  explaining network can be constructed in quadratic time. Analogous
  results are obtained for reciprocal best match graphs (RBMGs),
  where an edge $\{x,y\}$ corresponds to pairs of vertices with different
  color that are mutually closest relatives.
\end{abstract}

\section{Introduction}

The best match relation in leaf-colored rooted trees formalizes the idea of
closest relatives among genes in evolutionary biology. In this setting, the
rooted tree $T$ describes the phylogenetic relationships among a set $X$ of
genes represented by the leaves of $T$, while the inner vertices of $T$
represent ancestors of these genes. Then the least common ancestor
$\lca(x,y)$ of two genes $x,y \in X$ is an ancestor of both $x$ and $y$
such that there exists no vertex below it in $T$ with the same property.
Colors identify the species $\sigma(x)$ in which gene $x\in X$ is found. A
gene $y$ in species $\sigma(y)$ is then a \emph{best match} of a gene $x$
in species $\sigma(x)\ne\sigma(y)$ if there is no other gene $y'$ in the
same species as $y$, i.e., with $\sigma(y')=\sigma(y)$, such that the least
common ancestors satisfy $\lca(x,y')\prec\lca(x,y)$, where $\prec$ is the
ancestor partial order in $T$. The best match relation derived from the
leaf-colored tree gives rise to the vertex-colored \emph{best match graph}
$\GG(T,\sigma)$ with vertex set $X$ and an arc $(x,y)$ whenever $y$ is a
best match of $x$, cf.\ Fig.~\ref{fig:tree_BMG_net_BMG}.  Best match graphs
derived from trees in this manner have been studied in detail in
\cite{Geiss2019,Geiss2020}.  In particular, a vertex-colored digraph
$(G,\sigma)$ is a best match graph derived from a tree if and only if it is
color-sink-free (i.e., every vertex has an out-neighbor of each color), and
the pair of sets of rooted and forbidden triples obtained from induced
subgraphs on three vertices is consistent
\cite{Schaller2021,SCHALLER202163}. Moreover, there is a unique least
resolved tree explaining $(G,\sigma)$ that can be constructed in polynomial
time from these triples.

However, evolutionary histories are not always described adequately by
trees. Processes such as hybridization or horizontal gene transfer create
reticulate patterns of ancestry, in which lineages may merge as well as
split. Thus, there has been a growing interest in mathematical phylogenetics
to generalize trees to ``networks'', i.e., directed acyclic graphs (DAG)
$N=(V,E)$ with a unique root $\rho$
\cite{Huson:2010,Kong:22,SolisLemus:26}.  Denoting by $\preceq$ the partial
order on $V$, i.e., $x\preceq y$ if $y$ lies along a directed path from the
root $\rho$ to $x$, the \emph{leaves} of $N$ are the $\preceq$-minimal
elements, or -- equivalently -- the vertices in $N$ with out-degree
zero. We denote by $X$ the set of leaves and say that $N$ is a network on
$X$.

A key difference between trees and networks is that a least common ancestor
needs no longer be unique, cf.\ Fig.~\ref{fig:tree_BMG_net_BMG}.  Let
$\LCA(x,y)$ comprise the $\preceq$-minimal common ancestors of $x$ and
$y$. As in the tree setting, the leaves $x\in X$ are colored by a map
$\sigma$.  In order to define best matches for a leaf-colored network
$(N,\sigma)$, we consider for each $x\in X$ and each color
$s\in\sigma(X) \coloneqq \{\sigma(x) \mid x \in X\}$ the set
\begin{equation} 
  Q(x,s)\coloneqq \min\nolimits_{\preceq}\{v\in\LCA(x,y)\mid y\in X,\
  \sigma(y)=s\}
\end{equation}
comprising the minimal common ancestors obtained by pairing $x$ with any
leaf of color $s$. If $N$ is a rooted tree, $Q(x,s)$ contains the unique
minimal least common ancestor between $x$ and any vertex $y$ with color
$\sigma(y)=s$. For general networks, $Q(x,s)\subseteq V$ may contain
several mutually incomparable vertices.  This leads to the following
definition, illustrated in Fig.~\ref{fig:tree_BMG_net_BMG}:
\begin{definition}
  A leaf $y\in X$ in the leaf-colored network $(N,\sigma)$ on $X$ is
  a \emph{best match} of $x\in X$, if $\sigma(x)\neq\sigma(y)$ and
  $\LCA(x,y)\subseteq Q(x,\sigma(y))$. The \emph{best match graph}
  $\GG(N,\sigma)$ of $(N,\sigma)$ is the colored digraph with vertex set
  $X$, coloring $\sigma$, and an arc $(x,y)$ if and only if $y$ is a best
  match of $x$.
\end{definition}
In other words, a gene $y$ is a best match of $x$ if there is no gene $y'$
of the same color as $y$ and no $u\in\LCA(x,y')$ such that $u\prec v$ for
some $v\in\LCA(x,y)$.  By construction, every BMG is properly colored. A
colored digraph $(G,\sigma)$ is a BMG if there exists a leaf-colored
network $(N,\sigma)$ with leaf set $V(G)$ such that
$(G,\sigma)=\GG(N,\sigma)$.  In this case, we say that $(N,\sigma)$
\emph{explains} $(G,\sigma)$.

\begin{figure}[t]
    \centering
    \includegraphics[width=0.8\textwidth]{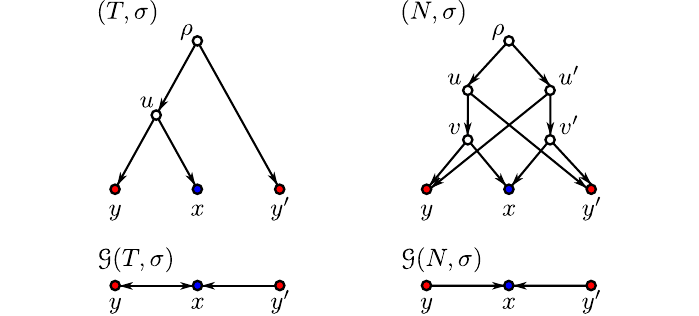}
    \caption{BMGs in trees and networks.  \textbf{Left:} A leaf-colored tree
      $(T,\sigma)$ on $X = \{x,y,y'\}$ and its best match graph
      $\GG(T,\sigma)$ is shown. Since
      $u = \lca(x,y) \prec \lca(x,y') = \rho$, the leaf $y$ is a best match
      of $x$.  Moreover, $x$ is a best match of both $y$ and $y'$.
      \textbf{Right:} For the leaf-colored network $(N,\sigma)$ on $X$, we
      have $\LCA(x,y) = \{v,u'\}$ and $\LCA(x,y') = \{u,v'\}$.  Since
      $Q(x,\sigma(y)) = \{v,v'\}$, neither $y$ nor $y'$ is a best match of
      $x$.  Moreover, $Q(y,\sigma(x)) = \LCA(y,x)$ and
      $Q(y',\sigma(x)) = \LCA(y',x)$ implies that $x$ is a best match of
      both $y$ and $y'$. The BMG $\GG(N,\sigma)$ is shown below
      $(N,\sigma)$. According to the results in \cite{Geiss2019,SCHALLER202163}, $\GG(N,\sigma)$ cannot be explained by a leaf-colored tree.} 
    \label{fig:tree_BMG_net_BMG}
\end{figure}

Reciprocal best matches, i.e., pairs $\{x,y\}$ where $y$ is a best match
for $x$ \emph{and} $x$ is a best match of $y$ play an important role for the
identification of orthologous genes, i.e., those that originate from
speciation events, for tree-like gene phylogenies. The corresponding class
of \emph{reciprocal best match graphs} (RBMG) derived from trees has been
studied in some detail in \cite{Geiss2020}. In contrast to BMGs, a complete
characterization of RBMGs of trees remains an open problem.

In this contribution, we characterize exactly those colored digraphs that
can be explained by a rooted network. To this end, we first introduce BOP
networks in Section~\ref{sec:BOP_networks}, which can be modified by local
``expansions'' into networks that explain any given BMG. This construction
leads to a simple linear-time recognition algorithm for BMGs. In the
affirmative case, an explaining leaf-colored network can be constructed in
$O(|V(G)|^2)$ time. From the recognition algorithm, we derive a simple
structural characterization of BMGs, the main result of our contribution,
Theorem~\ref{thm:char-LCA-BMGs} in Section~\ref{sec:BMG}.  Analogous
characterization and recognition results for undirected RBMGs are described
in Section~\ref{sec:RBMG}.  We close with a brief survey of open questions.

\section{BOP Networks} 
\label{sec:BOP_networks}

The starting point for our characterization of BMGs is a class of dense
networks defined below. To avoid trivial edge cases, we will assume
throughout that $|X|\ge 2$ and there are at least two colors
$|\sigma(X)|\ge2$.

\begin{definition}[BOP network]
  \label{def:pair-lcaN}
  Let $X$ be a finite set equipped with a coloring $\sigma$.  A
  \emph{bichromatic ordered-pair network (BOP network) on $X$} is a
  leaf-colored network $(N,\sigma)$ on $X$ with root $\rho$ constructed as
  follows.  For every ordered pair $(x,y)$ with $x,y\in X$ and
  $\sigma(x)\neq\sigma(y)$, introduce a distinct inner vertex $p_{x,y}$ and
  the arcs $(\rho,p_{x,y})$, $(p_{x,y},x)$, and $(p_{x,y},y)$.  There are
  no further vertices or arcs.
\end{definition}
Note that $p_{x,y}$ and $p_{y,x}$ are distinct if $x\neq y$.  Moreover,
since $|\sigma(X)|\ge 2$, every leaf $x\in X$ occurs below at least one
vertex $p_{x,y}$.  Hence, the digraph in Definition~\ref{def:pair-lcaN} is
indeed a network on $X$, see
Fig.~\hyperref[fig:BOP_extension_Alg_example]{\ref*{fig:BOP_extension_Alg_example}(a)}.

\begin{figure}
  \centering
  \includegraphics[width=\linewidth]{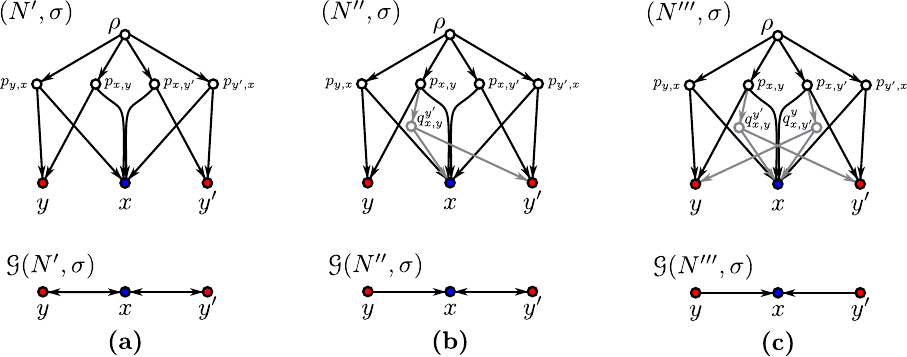}
  \caption{Extensions of BOP networks to explain a BMG.  \textbf{(a)} The
    BOP network $(N',\sigma)$ on $X = \{x,y,y'\}$ is depicted together with
    $\GG(N',\sigma)$.  \textbf{(b)} The leaf-colored network
    $(N'', \sigma)$ coincides with the extension of $N'$ by $[x,y :
    x,y']$. The BMG $\GG(N'',\sigma)$ differs from $\GG(N',\sigma)$ only by
    the absence of the single arc $(x,y)$.  \textbf{(c)} The leaf-colored
    network $(N''',\sigma)$ coincides with the extension of $N''$ by
    $[x,y':x,y]$.  \newline In all examples, the respective network
    $(N,\sigma)$ is the network returned by
    Algorithm~\ref{alg:construction} with input $\GG(N,\sigma)$ for
    $N\in \{N',N'',N'''\}$.}
    \label{fig:BOP_extension_Alg_example}
\end{figure}
  
\begin{lemma}
  \label{lemma:pair-lca-ALL-arcs}
  Let $(N,\sigma)$ be a BOP network on $X$.  Then, for all $x,y\in X$ with
  $\sigma(x)\neq\sigma(y)$, we have $\LCA(x,y)=\{p_{x,y},p_{y,x}\}$.
  Moreover, both $(x,y)$ and $(y,x)$ are arcs of $\GG(N,\sigma)$.
\end{lemma}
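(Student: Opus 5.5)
The plan is to argue directly from the very simple structure of the BOP network. Its vertex set is $\{\rho\}\cup X\cup\{p_{u,v}\}$. Every $p_{u,v}$ is a child of $\rho$, and its only children are the leaves $u$ and $v$. Hence the ancestors of a leaf $x$ are exactly $x$, $\rho$, and the vertices $p_{u,v}$ with $x\in\{u,v\}$. In particular, all directed paths have length at most two.

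\textbf{Step 1: the common ancestors of $x$ and $y$.} Fix $x,y\in X$ with $\sigma(x)\neq\sigma(y)$, so $x\neq y$. Neither leaf is an ancestor of the other, since leaves have out-degree zero. So the common ancestors are $\rho$ together with those $p_{u,v}$ satisfying $\{u,v\}=\{x,y\}$, namely $p_{x,y}$ and $p_{y,x}$. Both of these exist because $\sigma(x)\neq\sigma(y)$.

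\textbf{Step 2: identifying $\LCA(x,y)$.} The vertices $p_{x,y}$ and $p_{y,x}$ are distinct, and they are incomparable because neither lies below the other. Both satisfy $p\prec\rho$. Therefore the $\preceq$-minimal common ancestors are exactly these two, giving $\LCA(x,y)=\{p_{x,y},p_{y,x}\}$.

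\textbf{Step 3: the best-match arcs.} It suffices to prove that $(x,y)$ is an arc, since the roles of $x$ and $y$ are symmetric. We need $\LCA(x,y)\subseteq Q(x,\sigma(y))$.

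Every leaf $y'$ with $\sigma(y')=\sigma(y)$ differs in color from $x$, so Step 2 applies to the pair $x,y'$. Hence $Q(x,\sigma(y))$ is the set of $\preceq$-minimal elements of $\bigcup_{y'}\{p_{x,y'},p_{y',x}\}$, taken over all such $y'$.

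All of these vertices are children of $\rho$ whose only descendants are leaves, so they are pairwise incomparable. The set is therefore an antichain, and it equals its own set of minimal elements. In particular it contains $p_{x,y}$ and $p_{y,x}$ (take $y'=y$), which gives the inclusion. Together with $\sigma(x)\neq\sigma(y)$, this makes $y$ a best match of $x$.

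There is no real obstacle here. The only point needing care is the antichain observation: no $p$-vertex lies strictly below another. This holds because every arc out of a $p$-vertex goes directly to a leaf.
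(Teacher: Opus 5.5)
Your proof is correct and follows essentially the same route as the paper. You read off the common ancestors of $x$ and $y$ from the construction, note that $p_{x,y}$ and $p_{y,x}$ are incomparable proper descendants of $\rho$, and observe that no element of $\LCA(x,y')$ for a leaf $y'$ of color $\sigma(y)$ lies strictly below an element of $\LCA(x,y)$. Your antichain formulation is slightly cleaner than the paper's wording: it covers $y'=y$ without claiming that $p_{x,y}$ is incomparable with itself.
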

\begin{proof}
  Let $(N,\sigma)$ be a BOP network on $X$ and let $x,y\in X$ with
  $\sigma(x)\neq\sigma(y)$.  By construction, the only common ancestors of
  $x$ and $y$ other than the root $\rho$ are precisely $p_{x,y}$ and
  $p_{y,x}$.  These two vertices are $\preceq$-incomparable and both are
  proper descendants of $\rho$.  Hence, $\LCA(x,y)=\{p_{x,y},p_{y,x}\}$.

  Now let $y'\in X$ with $\sigma(y')=\sigma(y)$.  Since
  $\sigma(x)\neq\sigma(y')$, we likewise have
  $\LCA(x,y')=\{p_{x,y'},p_{y',x}\}$.  Every vertex in $\LCA(x,y')$ is
  $\preceq$-incomparable with every vertex in $\LCA(x,y)$.  Hence, for
  every $v\in\LCA(x,y)$, there is no $u\in\LCA(x,y')$ with $u\prec v$.
  Thus, $\LCA(x,y)\subseteq Q(x,\sigma(y))$, and
  $(x,y)\in E(\GG(N,\sigma))$. By symmetry, $(y,x)\in E(\GG(N,\sigma))$ as
  well. \qed
\end{proof}

The BOP network therefore explains the colored digraph containing every possible
arc between distinctly colored vertices of $X$.  In order to explain an
arbitrary BMG $(G,\sigma)$, we have to modify the BOP network in response
to the non-arcs in $(G,\sigma)$: 

\begin{definition}[extension]
    \label{def:B1-extension}
    Let $(N,\sigma)$ be a leaf-colored network on $X$ that contains the BOP
    network on $X$ as a subgraph. Let $x,y,z\in X$ be pairwise distinct
    leaves satisfying $\sigma(x)\neq\sigma(y)=\sigma(z)$. The \emph{extension
    of $N$ by $[x,y:x,z]$} is obtained by introducing a new vertex
    $q^{z}_{x,y}$ and adding the arcs $(p_{x,y},q^{z}_{x,y})$,
    $(q^{z}_{x,y},x)$, and $(q^{z}_{x,y},z)$. We call $(x,y)$ the \emph{first
    part} and $(x,z)$ the \emph{second part} of the extension.
\end{definition}

The notation $q^{z}_{x,y}$ uniquely specifies the extension by which the
vertex is introduced. An extension only adds a new vertex and arcs and does
not delete or modify any part of the underlying BOP network. Hence, the
original BOP network remains a subgraph after every sequence of extensions.
In particular, since $\sigma(x)\neq\sigma(y)$, the vertex $p_{x,y}$
remains present throughout the construction, and extensions can therefore be
applied successively in arbitrary order.

An extension $[x,y:x,z]$ has a simple effect on the LCA structure:
$q^{z}_{x,y}$ becomes an additional LCA of $x$ and $z$ and satisfies
$ q^{z}_{x,y}\prec p_{x,y} $.  Thus, the extension creates an LCA of $x$
and $z$ strictly below one of the LCAs of $x$ and $y$. Thus, $y$ can no
longer be a best match of $x$, see
Fig.~\hyperref[fig:BOP_extension_Alg_example]{\ref*{fig:BOP_extension_Alg_example}(b).}
We next record the structural properties of networks obtained from a BOP
network by successive extensions.

\begin{lemma}
  \label{lem:properties_extensions_BIC_cherry_net}
  Let $(N,\sigma)$ be a BOP network on $X$ with root $\rho$, and let $N'$
  be obtained from $N$ by a finite sequence of extensions such that no
  ordered pair $(x,y)$ occurs as the first part of more than one
  extension. Then the following holds:
  \begin{enumerate}
  \item[(i)] $N'$ is a network on $X$ with root $\rho$.
  \item[(ii)] Let $x,y\in X$ with $\sigma(x)\neq\sigma(y)$.  Then
    $p_{x,y},p_{y,x}\in\LCA_{N'}(x,y)$.  Moreover, if neither $(x,y)$ nor
    $(y,x)$ occurs as the second part of an extension, then
    $\LCA_{N'}(x,y)=\{p_{x,y},p_{y,x}\}$.
  \item[(iii)] If $[x,y:x,z]$ is applied to $N$ then,     
  $q^{z}_{x,y}\in\LCA_{N'}(x,z)$ and $q^{z}_{x,y}\prec_{N'}p_{x,y}$.
  \item[(iv)] Let $x,y,z\in X$ be pairwise distinct with
    $\sigma(x)\neq\sigma(y)=\sigma(z)$.  If there exist
    $u\in\LCA_{N'}(x,z)$ and $v\in\LCA_{N'}(x,y)\setminus\{\rho\}$ such
    that $u\prec_{N'}v$, then the extension $[x,y:x,z]$ is applied in the
    construction of $N'$.
\end{enumerate}
\end{lemma}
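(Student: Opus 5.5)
The whole argument rests on an explicit description of $N'$. Its vertex set is $\{\rho\}\cup\{p_{a,b}\}\cup\{q^{c}_{a,b}\}\cup X$. Its arcs are $(\rho,p_{a,b})$, $(p_{a,b},a)$ and $(p_{a,b},b)$, together with $(p_{a,b},q^{c}_{a,b})$, $(q^{c}_{a,b},a)$ and $(q^{c}_{a,b},c)$ for each applied extension $[a,b:a,c]$. From this I will first record the ancestor sets:
\begin{itemize}
\item a $q$-vertex $q^{c}_{a,b}$ has exactly the leaf descendants $\{a,c\}$;
\item a $p$-vertex $p_{a,b}$ has the leaf descendants $\{a,b\}\cup\{c \mid [a,b:a,c]\text{ applied}\}$;
\item $\rho$ lies above everything.
\end{itemize}
Because each $(a,b)$ is the first part of at most one extension, $p_{a,b}$ has at most one $q$-child. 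Hence its leaf descendants are $\{a,b\}$ or $\{a,b,c\}$ for a single $c$.

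For (i), acyclicity follows from the layering: $\rho$ first, then $p$-vertices, then $q$-vertices, then leaves, with every arc going strictly down a layer. Every non-root vertex has an in-arc from an earlier layer, so $\rho$ is the unique root. The leaves are exactly $X$, since every $p$- and $q$-vertex has out-arcs and every $x\in X$ lies below some $p_{x,y}$ because $|\sigma(X)|\ge2$.

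For (ii), $p_{x,y}$ is a common ancestor of $x$ and $y$. Its only proper descendants are $x$, $y$, possibly $c$, and possibly $q^{c}_{x,y}$, whose leaf set $\{x,c\}$ omits $y$ because $c\neq y$. Hence no proper descendant of $p_{x,y}$ is a common ancestor, and $p_{x,y}$ is minimal. The same argument applies to $p_{y,x}$. For the second claim, I list all common ancestors of $x$ and $y$ other than $\rho$:
\begin{itemize}
\item a $q$-vertex $q^{c}_{a,b}$ with $\{a,c\}=\{x,y\}$, which makes $(x,y)$ or $(y,x)$ a second part and is excluded by hypothesis;
\item a $p$-vertex $p_{a,b}$ with $\{x,y\}\subseteq\{a,b,c\}$.
\end{itemize}
For the $p$-vertices, either $\{a,b\}=\{x,y\}$, or the pair involves $c$. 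In the latter case $a\in\{x,y\}$ and $c$ is the other element, since $\sigma(b)=\sigma(c)$ while $\sigma(x)\ne\sigma(y)$, so the pair cannot be $\{b,c\}$. Then $(a,c)$ is $(x,y)$ or $(y,x)$ and is a second part, again excluded. So the LCA set is exactly $\{p_{x,y},p_{y,x}\}$ ($\rho$ is not minimal).

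For (iii), $q^{z}_{x,y}$ has leaf set $\{x,z\}$ and its only proper descendants are leaves, so $q^{z}_{x,y}\in\LCA_{N'}(x,z)$. The arc $(p_{x,y},q^{z}_{x,y})$ gives $q^{z}_{x,y}\prec p_{x,y}$.

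The main obstacle is (iv), which needs a case analysis on $v\in\LCA_{N'}(x,y)\setminus\{\rho\}$ and on a $u\in\LCA_{N'}(x,z)$ with $u\prec v$.
\begin{itemize}
\item If $v$ is a $q$-vertex, its only proper descendants are leaves, and $u$ is not a leaf. This case is impossible.
\item So $v=p_{a,b}$, and its only non-leaf proper descendant is $u=q^{c}_{a,b}$, with $[a,b:a,c]$ applied and $\{a,c\}=\{x,z\}$ (using $u\in\LCA(x,z)$). Since $\sigma(b)=\sigma(c)$:
\begin{itemize}
\item if $a=z$ and $c=x$, then $\sigma(b)=\sigma(x)$, so the leaf set $\{z,b,x\}$ of $v$ cannot contain $y$ (its colour differs from that of $x$, and $y\neq z$). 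This contradicts $v$ being a common ancestor of $x$ and $y$.
\item Hence $a=x$ and $c=z$, and $y\in\{x,b,z\}$ forces $y=b$.
\end{itemize}
\end{itemize}
Thus $[x,y:x,z]$ was applied.
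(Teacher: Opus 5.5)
Your proof is correct and takes essentially the paper's route. Both arguments rest on the same explicit structure of $N'$: a layering (equivalently, no arcs into old vertices) for (i), and for (iv) the observation that $v$ must be some $p_{a,b}$ with $u$ its unique $q$-child $q^{c}_{a,b}$, after which the colour argument forces $a=x$, $c=z$, $b=y$. Your handling of the second claim in (ii) is slightly more explicit than the paper's, since you also rule out a vertex $p_{a,b}$ becoming a common ancestor of $x$ and $y$ through the extra leaf of its $q$-child.
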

\begin{proof}
  (i) Every vertex of the original BOP network $(N,\sigma)$ is either the root
  $\rho$, a vertex $p_{a,b}$, or a leaf. An extension $[a,b:a,c]$
  introduces precisely one additional vertex $q^c_{a,b}$ below $p_{a,b}$,
  with children $a$ and $c$. Since no ordered pair $(a,b)$ occurs as the
  first part of more than one extension, every vertex $p_{a,b}$ has at most
  one non-leaf child introduced by an extension. In particular, no
  extension introduces an arc into the root, into an original vertex
  $p_{a,b}$, or into a previously introduced $q$-vertex. Thus, no directed
  cycle is created. Moreover, every inner vertex remains reachable from
  $\rho$, and every leaf remains a leaf. Hence, $N'$ is a network on $X$
  with root $\rho$ and (i) holds.

  In the following, we write $\LCA_N$ and $\LCA_{N'}$ to distinguish the LCA
  sets of the original and the edited network, respectively.

  (ii) Let $x,y\in X$ with $\sigma(x)\neq\sigma(y)$. By
  Lemma~\ref{lemma:pair-lca-ALL-arcs}, we have
  $\LCA_N(x,y) = \{p_{x,y},p_{y,x}\}$. The vertices $p_{x,y}$ and $p_{y,x}$
  remain common ancestors of $x$ and $y$ throughout the construction. We
  claim that neither of them ceases to be an LCA. Consider $p_{x,y}$. The
  only possible new vertex below $p_{x,y}$ is a vertex introduced by an
  extension whose first part is $(x,y)$. Such a vertex has children $x$ and
  some vertex $z$ with $z\neq y$. In particular, it is not an ancestor of
  $y$. Hence, no new common ancestor of $x$ and $y$ is introduced strictly
  below $p_{x,y}$. Thus, $p_{x,y}\in\LCA_{N'}(x,y)$. The same argument
  applies to $p_{y,x}$. Now suppose that neither $(x,y)$ nor $(y,x)$ occurs
  as the second part of an extension. A new LCA of $x$ and $y$ can only be
  introduced as a $q$-vertex whose two leaf descendants are precisely $x$
  and $y$. By the definition of an extension, this occurs exactly when
  either $(x,y)$ or $(y,x)$ appears as its second part.  Hence,
  $\LCA_{N'}(x,y)=\{p_{x,y},p_{y,x}\}$. This proves (ii).

  (iii) Suppose that the extension $[x,y:x,z]$ is applied. By
  construction, the new vertex $q^{z}_{x,y}$ has only $x$ and $z$ as
  children. Therefore, $q^{z}_{x,y}\in\LCA_{N'}(x,z)$. Moreover, since
  $(p_{x,y},q^{z}_{x,y})$ is an arc, we have
  $q^{z}_{x,y}\prec_{N'}p_{x,y}$. Thus, (iii) holds.

  (iv) Let $x,y,z\in X$ be pairwise distinct with
  $\sigma(x)\neq\sigma(y)=\sigma(z)$, and suppose that there exist
  $u\in\LCA_{N'}(x,z)$ and $v\in\LCA_{N'}(x,y)\setminus\{\rho\}$ such that
  $u\prec_{N'}v$.  Note first that $u$ and $v$ cannot be leaves.  Moreover,
  every non-root inner vertex of $N'$ is either an original vertex
  $p_{a,b}$ or a vertex $q^c_{a,b}$ introduced by an extension $[a,b:a,c]$.
  Since no inner vertex is a proper descendant of a $q$-vertex and
  $u\prec_{N'}v$, the vertex $v$ cannot be a $q$-vertex.  Hence,
  $v=p_{a,b}$ for some $a,b\in X$ with $\sigma(a)\neq\sigma(b)$.  The only
  proper non-leaf descendant of $p_{a,b}$ is, if present, a vertex
  $q_{a,b}^c$ introduced by an extension $[a,b:a,c]$, where
  $\sigma(b)=\sigma(c)$.  Thus, $u = q_{a,b}^c$. Since $q_{a,b}^c$ has
  precisely $a$ and $c$ as leaf descendants and
  $u=q_{a,b}^c\in\LCA_{N'}(x,z)$, we have $\{x,z\}=\{a,c\}$.  Since
  $p_{a,b}\in\LCA_{N'}(x,y)$ and its leaf descendants are $a,b,c$, the
  pairwise distinctness of $x,y,z$ now implies $b=y$. These two
  arguments together with $\sigma(c) = \sigma(b) = \sigma(y) = \sigma(z)$
  imply that $z = b$ and $x = a$.  Thus, the extension $[x,y : x,z]$ was
  applied and (iv) holds.\qed
\end{proof}

\section{Characterization of BMGs}
\label{sec:BMG}

In this section, we show that proper coloring together with the following
condition characterizes BMGs. The sufficiency is established constructively
by an algorithm that produces an explaining network.

An element $x\in X$ is a \emph{single color representative (sicor)} if $x$
is the unique element of its color, that is $\sigma(x) \neq \sigma(y)$ for
all $y \in X \setminus \{x\}$.  A colored digraph $(G,\sigma)$ is said to
have the \emph{sicor-in-hub property} if, for every sicor $x\in X$ and
every $y\in X\setminus\{x\}$, it holds that $(y,x)\in E$. 

\begin{lemma}
  \label{lemma:proper_color_sicor_in_hub_necessary}
  If the colored digraph $(G,\sigma)$ is a BMG, then it is properly colored
  and has the sicor-in-hub property.
\end{lemma}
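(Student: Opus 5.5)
Suppose $(G,\sigma)$ is explained by a leaf-colored network $(N,\sigma)$ on $X=V(G)$, i.e., $(G,\sigma)=\GG(N,\sigma)$. The plan is to verify the two properties separately, each directly from the definition of a best match.

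\emph{Proper coloring.} By definition, $y$ is a best match of $x$ only if $\sigma(x)\neq\sigma(y)$. Hence every arc $(x,y)$ of $\GG(N,\sigma)$ joins vertices of distinct colors, so $(G,\sigma)$ is properly colored. There are no loops, since $\sigma(x)=\sigma(x)$.

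\emph{Sicor-in-hub property.} Let $x\in X$ be a sicor and let $y\in X\setminus\{x\}$. Since $x$ is the unique leaf of its color, we have $\sigma(y)\neq\sigma(x)=:s$. We must show $\LCA(y,x)\subseteq Q(y,s)$.

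The key observation is that the only leaf of color $s$ is $x$ itself. The set over which $Q(y,s)$ minimizes is therefore exactly $\{v\in\LCA(y,x')\mid \sigma(x')=s\}=\LCA(y,x)$. Thus
\begin{equation*}
Q(y,s)=\min\nolimits_{\preceq}\LCA(y,x).
\end{equation*}
By definition, $\LCA(y,x)$ consists of the $\preceq$-minimal common ancestors of $y$ and $x$. These form an antichain, so taking $\preceq$-minimal elements again changes nothing, and $\min_{\preceq}\LCA(y,x)=\LCA(y,x)$. Hence $\LCA(y,x)=Q(y,\sigma(x))$, so $x$ is a best match of $y$ and $(y,x)\in E(G)$.

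Neither step presents a real obstacle. The only point requiring care is noting that $\LCA(y,x)$ is an antichain, so that minimizing it is idempotent. One should also note that $\LCA(y,x)\neq\emptyset$, because the root is a common ancestor; this is not needed for the inclusion itself.
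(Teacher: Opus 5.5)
Your proof is correct and essentially identical to the paper's: proper coloring follows from the definition of a best match. For a sicor $x$, the set over which $Q(y,\sigma(x))$ is minimized is exactly $\LCA(y,x)$, whose elements are already $\preceq$-minimal, so $Q(y,\sigma(x))=\LCA(y,x)$ and $(y,x)$ is an arc; the only difference is that the paper swaps the names $x$ and $y$.
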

\begin{proof}
  Proper coloring follows directly from the definition. Let $y\in X$ be a
  sicor and let $x\in X\setminus\{y\}$. Since $y$ is the unique vertex of
  color $\sigma(y)$, we have $Q(x,\sigma(y))=\min_{\preceq}\LCA(x,y)$. As
  every element of $\LCA(x,y)$ is $\preceq$-minimal among the common
  ancestors of $x$ and $y$, it follows that
  $Q(x,\sigma(y))=\LCA(x,y)$. Hence, $\LCA(x,y)\subseteq Q(x,\sigma(y))$,
  and $y$ is a best match of $x$. Thus, $(x,y)\in E$. Since this holds for
  every sicor $y$ and every $x\in X\setminus\{y\}$, $(G,\sigma)$ has the
  sicor-in-hub property. \qed 
\end{proof} 

To show that these two conditions are also sufficient, we construct a
leaf-colored network that explains every properly colored digraph with the
sicor-in-hub property. To this end, we use the class of BOP networks and
extensions to construct an explaining network, see
Fig.~\hyperref[fig:tree_BMG_net_BMG]{\ref*{fig:tree_BMG_net_BMG}(c)}.

\begin{algorithm}[ht]
\caption{Construction of a network explaining a BMG}
\label{alg:construction}
\begin{algorithmic}[1]
  \Require A properly colored digraph $(G=(X,E),\sigma)$ with the
           sicor-in-hub property
  \Ensure A leaf-colored network $(N,\sigma)$ explaining $(G,\sigma)$
  \State Let $(N,\sigma)$ be the BOP network on $X$
  \ForAll{$x,y\in X$ with $\sigma(x)\neq\sigma(y)$ and $(x,y)\notin E$}
     \State Choose $y'\in X\setminus\{y\}$ with $\sigma(y')=\sigma(y)$
        \label{alg1:line:extend}
     \State Extend $N$ by $[x,y:x,y']$
  \EndFor
  \State \Return $(N,\sigma)$
\end{algorithmic}
\end{algorithm}

\begin{lemma}
\label{lemma:constructioncorrect}
Let $(G,\sigma)$ be a properly colored digraph with the sicor-in-hub
property. Then Algorithm~\ref{alg:construction} returns a leaf-colored
network $(N,\sigma)$ explaining $(G,\sigma)$.
\end{lemma}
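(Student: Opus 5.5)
The proof splits into three checks: the algorithm is well defined, its output is a network to which Lemma~\ref{lem:properties_extensions_BIC_cherry_net} applies, and every arc and every non-arc of $(G,\sigma)$ is reproduced.

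\emph{Well-definedness.} Consider an iteration with $\sigma(x)\neq\sigma(y)$ and $(x,y)\notin E$. Then $y$ cannot be a sicor: otherwise the sicor-in-hub property would force $(x,y)\in E$. So some $y'\neq y$ with $\sigma(y')=\sigma(y)$ exists, and line~\ref{alg1:line:extend} can be executed. Moreover $y'\neq x$, since $\sigma(x)\neq\sigma(y')$. Hence $x,y,y'$ are pairwise distinct and the extension $[x,y:x,y']$ is admissible.

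\emph{Applicability of the earlier lemma.} Each ordered pair $(x,y)$ is visited once in the loop. It is therefore the first part of at most one extension, and Lemma~\ref{lem:properties_extensions_BIC_cherry_net}(i) shows that $N$ is a network on $X$. I also record a useful fact: the pair $(x,y)$ is the first part of an extension exactly when $(x,y)\notin E$.

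\emph{Non-arcs.} Let $\sigma(x)\neq\sigma(y)$ and $(x,y)\notin E$. Then $[x,y:x,y']$ was applied. By Lemma~\ref{lem:properties_extensions_BIC_cherry_net}(iii), $q^{y'}_{x,y}\in\LCA(x,y')$ and $q^{y'}_{x,y}\prec p_{x,y}$. By part (ii), $p_{x,y}\in\LCA(x,y)$. So $p_{x,y}\in\LCA(x,y)$ lies strictly above an element of $\LCA(x,y')$, which means $p_{x,y}\notin Q(x,\sigma(y))$. Hence $(x,y)$ is not an arc of $\GG(N,\sigma)$. Pairs with equal colors are non-arcs on both sides, since $G$ is properly colored and BMGs are proper by definition.

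\emph{Arcs.} Let $(x,y)\in E$, and suppose for contradiction that $(x,y)\notin E(\GG(N,\sigma))$. Then some $v\in\LCA(x,y)$ is not in $Q(x,\sigma(y))$. So there are a leaf $z$ with $\sigma(z)=\sigma(y)$ and some $u\in\LCA(x,z)$ with $u\prec v$. The case $z=y$ is impossible, because the elements of $\LCA(x,y)$ are pairwise incomparable. So $x,y,z$ are pairwise distinct. Also $v\neq\rho$: by part (ii), $p_{x,y}\in\LCA(x,y)$ lies strictly below $\rho$, so $\rho$ is not a minimal common ancestor. Lemma~\ref{lem:properties_extensions_BIC_cherry_net}(iv) now says $[x,y:x,z]$ was applied, so $(x,y)$ is a first part, so $(x,y)\notin E$. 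This contradicts the assumption.

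\emph{Main obstacle.} The delicate step is the arc case. It depends on part (iv) of the lemma covering all possible LCAs, including $q$-vertices created as second parts of other extensions. Since that lemma is already established, the main remaining care is the boundary cases $z=y$ and $v=\rho$, handled above.
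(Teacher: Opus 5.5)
Your proof is correct and follows essentially the same route as the paper. The sicor-in-hub property makes line~\ref{alg1:line:extend} executable, Lemma~\ref{lem:properties_extensions_BIC_cherry_net}(i) gives the network property, and parts (ii)--(iii) exclude each non-arc $(x,y)$. For the converse, part (iv) applies after you exclude $z=y$ and $v=\rho$. Your explicit check that $y'\neq x$, so that the extension $[x,y:x,y']$ involves pairwise distinct leaves, is a small detail the paper leaves implicit.
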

\begin{proof} 
  Let $(G=(X,E),\sigma)$ be properly colored and satisfy the sicor-in-hub
  property, and let $(N,\sigma)$ be the network returned by
  Algorithm~\ref{alg:construction}.  For every non-arc $(x,y)$ with
  $\sigma(x)\neq\sigma(y)$, the algorithm applies exactly one extension
  whose first part is $(x,y)$. Hence, no ordered pair occurs as the first
  part of more than one extension. By
  Lemma~\ref{lem:properties_extensions_BIC_cherry_net}(i), the returned
  digraph $N$ is therefore a network on $X$. Note that the choice of $y'$
  in Line~\ref{alg1:line:extend} in Algorithm~\ref{alg:construction} is
  always possible, since if $(x,y)\notin E$ and $y$ were a sicor, then the
  sicor-in-hub property would imply $(x,y)\in E$, a contradiction. Hence,
  if $(x,y)\notin E$, then there exists another vertex $y'\neq y$ with
  $\sigma(y')=\sigma(y)$.

  Next we show that $\GG(N,\sigma) = (G,\sigma)$ holds.  By definition, the
  vertex sets coincide.  Let $\widetilde E$ denote the arc set of
  $\GG(N,\sigma)$.  It remains to prove $E=\widetilde E$.  Let $x,y\in X$
  be distinct. If $\sigma(x)=\sigma(y)$, then $(x,y)\notin E$, as
  $(G,\sigma)$ is properly colored and $(x,y)\notin\widetilde E$ by the
  definition of best matches. Hence, assume in the following that
  $\sigma(x)\neq\sigma(y)$.  Suppose $(x,y)\notin E$. As observed above,
  $y$ is not a sicor. Hence, Algorithm~\ref{alg:construction} chooses some
  $y'\neq y$ with $\sigma(y')=\sigma(y)$ and applies the extension
  $[x,y:x,y']$. By Lemma~\ref{lem:properties_extensions_BIC_cherry_net},
  $p_{x,y}\in\LCA(x,y)$ and $q^{y'}_{x,y}\in\LCA(x,y')$, and moreover
  $q^{y'}_{x,y}\prec p_{x,y}$. Thus, at least one element of $\LCA(x,y)$ is
  not $\preceq$-minimal among all LCAs of $x$ and vertices of color
  $\sigma(y)$. Consequently, $\LCA(x,y)\not\subseteq Q(x,\sigma(y))$, and
  $y$ is not a best match of $x$. Hence, $(x,y)\notin\widetilde E$.

  Conversely, suppose $(x,y)\notin\widetilde E$. Then $y$ is not a best
  match of $x$, and therefore $\LCA(x,y)\not\subseteq
  Q(x,\sigma(y))$. Hence, there exists some $v\in\LCA(x,y)$ that is not
  $\preceq$-minimal among all LCAs of $x$ and leaves of color
  $\sigma(y)$. Consequently, there exists a vertex $y'\in X$ with
  $\sigma(y')=\sigma(y)$ and some $u\in\LCA(x,y')$ such that
  $u\prec v$. Necessarily, $y'\neq y$, since distinct LCAs of the same
  pair cannot be strictly comparable. By
  Lemma~\ref{lem:properties_extensions_BIC_cherry_net}(ii),
  $p_{x,y}\in\LCA(x,y)$.  Hence, $\rho\notin\LCA(x,y)$ and therefore
  $v\in\LCA(x,y)\setminus\{\rho\}$. Since
  $\sigma(x)\neq\sigma(y)=\sigma(y')$, all assumptions of
  Lemma~\ref{lem:properties_extensions_BIC_cherry_net}(iv) are
  satisfied. Thus, the extension $[x,y:x,y']$ was applied during
  Algorithm~\ref{alg:construction}. Such an extension is applied only if
  $(x,y)\notin E$. Hence, $(x,y)\notin\widetilde E$ implies
  $(x,y)\notin E$. We conclude that $E=\widetilde E$, and therefore
  $\GG(N,\sigma)=(G,\sigma)$. Thus, $(N,\sigma)$ explains $(G,\sigma)$.\qed
\end{proof}

\begin{lemma}
  \label{lemma:Alg1-runningtime}
  Algorithm~\ref{alg:construction} can be implemented to run in
  $O(|V(G)|^2)$ time provided the input graph $(G,\sigma)$ is a properly
  colored digraph with the sicor-in-hub property.
\end{lemma}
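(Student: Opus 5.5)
The running-time bound follows by charging every step of Algorithm~\ref{alg:construction} to vertices or ordered pairs of $X$. Throughout, let $n\coloneqq |V(G)|=|X|$. I would first fix the data structures. The input $G$ comes with its arc set, which I would turn into an $n\times n$ adjacency matrix in $O(n^2)$ time so that ``$(x,y)\in E$?'' is answered in constant time. The network $N$ would be stored by adjacency lists, with a table that maps each ordered pair $(x,y)$ to the vertex $p_{x,y}$; this table is again $n\times n$ and gives $p_{x,y}$ in constant time.

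\textbf{Step 1 (BOP network).} Line~1 creates the root and, for each of the at most $n(n-1)$ ordered pairs $(x,y)$ with $\sigma(x)\neq\sigma(y)$, one vertex $p_{x,y}$ together with three arcs. Each pair costs $O(1)$, and testing $\sigma(x)\neq\sigma(y)$ is $O(1)$ once $\sigma$ is stored as an array. So the BOP network has $O(n^2)$ vertices and arcs and is built in $O(n^2)$ time.

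\textbf{Step 2 (choosing $y'$).} This is the only step that could cost more than constant time per pair: a naive scan for $y'$ in Line~\ref{alg1:line:extend} would take $O(n)$ per non-arc and $O(n^3)$ overall. To avoid this, I would preprocess the color classes once in $O(n)$ time, for instance by bucketing $X$ by color, and record for each color $s$ two distinct representatives $r_1(s)\neq r_2(s)$ whenever the class has at least two elements. The required $y'$ is then $r_1(\sigma(y))$ if this differs from $y$, and $r_2(\sigma(y))$ otherwise, which is an $O(1)$ lookup. This lookup is well defined exactly because of the argument in the proof of Lemma~\ref{lemma:constructioncorrect}: by the sicor-in-hub property, $(x,y)\notin E$ means $y$ is not a sicor, so its color class has at least two elements.

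\textbf{Step 3 (extensions).} The loop runs over at most $n(n-1)$ ordered pairs, and each membership test costs $O(1)$ via the adjacency matrix. Each extension $[x,y:x,y']$ adds one vertex $q^{y'}_{x,y}$ and three arcs, after $p_{x,y}$ has been located in $O(1)$ through the table. The loop therefore takes $O(n^2)$ time, and the output network has size $O(n^2)$.

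Summing the three steps gives $O(n^2)=O(|V(G)|^2)$. The only genuine obstacle is the selection of $y'$ in Step~2, which the color-class preprocessing settles. I would also note that the bound cannot be improved in general, since the returned network has $\Theta(n^2)$ vertices whenever there are enough bichromatic pairs.
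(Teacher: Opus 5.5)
Your proposal is correct and follows the paper's proof essentially step by step. Both build the BOP network in $O(n^2)$ time, use an adjacency matrix for constant-time arc tests, and preprocess the color classes in $O(n)$ time, storing two distinct representatives so that $y'$ is found in constant time (well defined by the sicor-in-hub property); your explicit table mapping $(x,y)$ to $p_{x,y}$ just makes precise what the paper summarizes as carrying out each extension in constant time.
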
 
\begin{proof}
  Let $n=|X|$. In the construction of the BOP network on $X$, we add a constant number of vertices
  and arcs for every ordered pair $x,y\in X$ with $\sigma(x)\neq\sigma(y)$. Since there are at most
  $n^2$ such pairs, the initial BOP network can be constructed in $O(n^2)$ time. In a preprocessing
  step, $X_s = \{x \in X \mid \sigma(x) = s\}$ is stored as a list together with its cardinality for
  each $s \in \sigma(X)$. If $|X_s|\ne 1$, two distinct representatives of $X_s$ are stored. All
  color classes and the required auxiliary information can be constructed in $O(n)$ time, while
  different representations of $G$ can be interconverted in $O(n^2)$ time. Now fix $x,y\in X$ and
  consider the \emph{for}-loop. Assuming that $G$ is given as an adjacency matrix, verifying whether
  $(x,y)\notin E$ is clearly a constant-time operation. Likewise, it can be checked in constant time
  whether or not $\sigma(x)\neq\sigma(y)$. Whenever an extension is required, $y$ is not a sicor and
  hence $|X_{\sigma(y)}|\geq 2$. Therefore, a vertex $y'\in X_{\sigma(y)}\setminus\{y\}$ can be
  selected in constant time by choosing one of the two stored representatives that is distinct from
  $y$. This is equivalent to selecting a vertex $y' \in X \setminus \{y\}$ with $\sigma(y') =
  \sigma(y)$, as required in Line~\ref{alg1:line:extend}. Moreover, the extension $[x,y:x,y']$ can
  clearly be carried out in constant time. Since the \emph{for}-loop is executed at most $n^2$ times
  and each iteration, including the work in Line~\ref{alg1:line:extend}, can be completed in
  constant time, the total running time of the \emph{for}-loop is $O(n^2)$. In summary,
  Algorithm~\ref{alg:construction} has time complexity $O(n^2)$. \qed
\end{proof}

We are now in the position to state the main result of this contribution.
\begin{theorem}
  \label{thm:char-LCA-BMGs}
  A colored digraph $(G,\sigma)$ is a BMG if and only if $(G,\sigma)$ is
  properly colored and has the sicor-in-hub property.  Moreover, it can be
  decided in $O(|V(G)|+|E(G)|)$ time whether a given colored digraph
  $(G,\sigma)$ is a BMG. In the affirmative case, a leaf-colored network
  explaining $(G,\sigma)$ can be constructed in $O(|V(G)|^2)$ time.
\end{theorem}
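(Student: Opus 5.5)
The characterization is assembled from the lemmas above. Necessity is Lemma~\ref{lemma:proper_color_sicor_in_hub_necessary}: every BMG is properly colored and has the sicor-in-hub property. For sufficiency, given a properly colored digraph with the sicor-in-hub property, I run Algorithm~\ref{alg:construction}. By Lemma~\ref{lemma:constructioncorrect} it returns a leaf-colored network $(N,\sigma)$ with $\GG(N,\sigma)=(G,\sigma)$, so $(G,\sigma)$ is a BMG. Together these give the equivalence. The $O(|V(G)|^2)$ bound for constructing an explaining network is exactly Lemma~\ref{lemma:Alg1-runningtime}; it applies because in the affirmative case the input satisfies the algorithm's preconditions.

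The only new work is the linear-time recognition. I do not run the construction here, since it is quadratic; I check the two conditions directly. Let $n=|V(G)|$ and $m=|E(G)|$, with $G$ given by adjacency lists.

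\emph{Proper coloring.} Scan every arc $(x,y)$ once, test $\sigma(x)\neq\sigma(y)$, and reject on failure. This takes $O(n+m)$ time.

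\emph{Sicor-in-hub property.} Count the size of each color class with one pass over the vertices, storing counts in an array or hash indexed by color ($O(n)$, after relabeling colors to $\{1,\dots,n\}$). This identifies the set $S$ of sicors. Compute the in-degrees $\indeg(x)$ by one pass over the arcs. Since $G$ is simple and loop-free, a sicor $x$ satisfies the condition iff $\indeg(x)=n-1$. So I accept iff $\indeg(x)=n-1$ for all $x\in S$. This step is again $O(n+m)$.

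The main point to be careful about is this sicor-in-hub test. Checking every pair $(y,x)$ explicitly would cost $\Theta(n|S|)$, which can be quadratic. The in-degree reformulation avoids this, and it relies on $G$ having no multi-arcs or loops. Note also that if some sicor has in-degree $n-1$, then $m\ge n-1$ anyway, so no hidden quadratic term appears. With this, the recognition runs in $O(|V(G)|+|E(G)|)$ time, completing the plan.
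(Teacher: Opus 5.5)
Your proposal is correct and takes essentially the same approach as the paper. Necessity and sufficiency come from Lemmas~\ref{lemma:proper_color_sicor_in_hub_necessary} and~\ref{lemma:constructioncorrect}, and the quadratic construction from Lemma~\ref{lemma:Alg1-runningtime}. Linear-time recognition comes from computing color-class sizes and in-degrees, checking $\sigma(x)\neq\sigma(y)$ on every arc, and testing $\indeg_G(x)=n-1$ for each sicor $x$.
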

\begin{proof}
  Lemma~\ref{lemma:proper_color_sicor_in_hub_necessary} and
  \ref{lemma:constructioncorrect} directly imply that $(G,\sigma)$ is a BMG
  if and only if $(G,\sigma)$ is properly colored and has the sicor-in-hub
  property. In order to estimate the running time, we set
  $n\coloneqq |V(G)|$ and $m\coloneqq |E(G)|$ and assume that $G$ is given
  by adjacency lists. We group the vertices into color classes
  $X_s \coloneqq \{x \in X \mid \sigma(x) = s\}$ and record the
  cardinalities $|X_s|$ in $O(n)$ time. We verify that
  $\sigma(x)\neq\sigma(y)$ for all $(x,y)\in E(G)$ and compute the
  in-degrees of all vertices in $O(m)$ total time.  A vertex is a sicor if
  $|X_{\sigma(x)}|=1$ and, once proper coloring has been verified, the
  sicor-in-hub property is equivalent to $\indeg_G(x)=n-1$ for every sicor
  $x$. Given the preprocessing, these conditions can be checked in constant
  time for each $x\in X$, amounting to $O(n)$ total time. Thus, we can
  decide in $O(n+m)$ time whether a colored digraph is a BMG. In the
  affirmative case, Algorithm~\ref{alg:construction} constructs an
  explaining network in $O(n^2)$ time by
  Lemma~\ref{lemma:Alg1-runningtime}.\qed
\end{proof}

Theorem~\ref{thm:char-LCA-BMGs} gives rise to a simple characterization of
the description of all possible BMGs for a given set of colored vertices:
\begin{proposition}
  \label{prop:BMG-interval}
  Let $(G=(X,E),\sigma)$ be a colored digraph. Then $(G,\sigma)$ is a BMG
  if and only if $E_{\min}\subseteq E\subseteq E_{\max}$, where
  $E_{\max} \coloneqq \{(x,y)\in X\times X \mid \sigma(x)\neq\sigma(y)\}$
  and
  $E_{\min} \coloneqq \{(x,y)\in X\times X \mid y \text{ is a sicor and }
  x\neq y\}$.
\end{proposition}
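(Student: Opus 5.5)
The plan is to reduce the statement directly to Theorem~\ref{thm:char-LCA-BMGs}. That theorem says $(G,\sigma)$ is a BMG exactly when it is properly colored and has the sicor-in-hub property. It therefore suffices to show two translations: proper coloring is equivalent to $E\subseteq E_{\max}$, and, in the presence of proper coloring, the sicor-in-hub property is equivalent to $E_{\min}\subseteq E$.

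For the first translation, we use that $G$ is a digraph on $X$ without loops, so every arc $(x,y)$ has $x\neq y$. Proper coloring means $\sigma(x)\neq\sigma(y)$ for every $(x,y)\in E$, which is literally the statement $E\subseteq E_{\max}$. For the second translation, the sicor-in-hub property requires $(y,x)\in E$ for every sicor $x$ and every $y\in X\setminus\{x\}$. After renaming variables, this is exactly $E_{\min}\subseteq E$.

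Combining the two translations gives both directions at once. If $(G,\sigma)$ is a BMG, Theorem~\ref{thm:char-LCA-BMGs} (or already Lemma~\ref{lemma:proper_color_sicor_in_hub_necessary}) yields both conditions, and hence $E_{\min}\subseteq E\subseteq E_{\max}$. Conversely, the inclusions give proper coloring and the sicor-in-hub property, so Lemma~\ref{lemma:constructioncorrect} provides an explaining network.

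As a consistency check, note that $E_{\min}\subseteq E_{\max}$ always holds: if $y$ is a sicor and $x\neq y$, then $\sigma(x)\neq\sigma(y)$. So the interval is nonempty, though the proof does not need this fact. There is no real obstacle; the only point requiring care is the absence of loops, which is needed to match $E\subseteq E_{\max}$ with proper coloring. This holds because $G$ is a digraph on the leaf set, and BMGs contain no loops by the condition $\sigma(x)\neq\sigma(y)$ in the definition of best matches.
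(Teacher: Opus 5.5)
Your proof is correct and matches the paper's: both reduce directly to Theorem~\ref{thm:char-LCA-BMGs} by identifying proper coloring with $E\subseteq E_{\max}$ and the sicor-in-hub property with $E_{\min}\subseteq E$. Your caveat about loops is harmless but not actually needed, since a loop $(x,x)$ already violates proper coloring (read as $\sigma(x)\neq\sigma(y)$ for every arc) and is excluded from $E_{\max}$ by definition.
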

\begin{proof}
By Theorem~\ref{thm:char-LCA-BMGs}, $(G,\sigma)$ is a BMG if and only if
it is properly colored and satisfies the sicor-in-hub property.
Proper coloring is equivalent to $E\subseteq E_{\max}$, 
while the sicor-in-hub property holds precisely if
$E_{\min}\subseteq E$. \qed
\end{proof}

As an immediate consequence of Proposition~\ref{prop:BMG-interval}, we have
\begin{corollary}
  If $(G_1,\sigma)$ and $(G_2,\sigma)$ are BMGs on the same colored vertex
  set, then the colored digraphs with arc sets $E(G_1)\cap E(G_2)$ and
  $E(G_1)\cup E(G_2) $ are BMGs as well.
\end{corollary}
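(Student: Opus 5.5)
The plan is to derive the corollary directly from Proposition~\ref{prop:BMG-interval}. The key observation is that the bounds $E_{\min}$ and $E_{\max}$ depend only on the colored vertex set $(X,\sigma)$, not on the arc set. Since $(G_1,\sigma)$ and $(G_2,\sigma)$ share the same vertex set $X$ and the same coloring $\sigma$, both are tested against the same interval $[E_{\min},E_{\max}]$.

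First, I apply Proposition~\ref{prop:BMG-interval} to each of $G_1$ and $G_2$. This gives $E_{\min}\subseteq E(G_i)\subseteq E_{\max}$ for $i=1,2$.

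Second, I use elementary lattice properties of set inclusion. From $E_{\min}\subseteq E(G_1)$ and $E_{\min}\subseteq E(G_2)$ we get $E_{\min}\subseteq E(G_1)\cap E(G_2)$. The intersection is also contained in $E(G_1)\subseteq E_{\max}$. Symmetrically, the union $E(G_1)\cup E(G_2)$ contains $E(G_1)\supseteq E_{\min}$, and it is contained in $E_{\max}$ because both parts are. Thus both new arc sets lie in the interval $[E_{\min},E_{\max}]$.

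Third, I apply the converse direction of Proposition~\ref{prop:BMG-interval} to the colored digraphs $(X,E(G_1)\cap E(G_2),\sigma)$ and $(X,E(G_1)\cup E(G_2),\sigma)$. This shows that both are BMGs.

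There is no real obstacle. The only point that needs care is confirming that the two digraphs really share $X$ and $\sigma$, so that the sets of sicors, and hence $E_{\min}$ and $E_{\max}$, coincide for both. This is exactly the hypothesis "on the same colored vertex set."
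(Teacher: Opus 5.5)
Your proposal is correct and matches the paper's route: the paper states this corollary as an immediate consequence of Proposition~\ref{prop:BMG-interval}, and you do exactly that. You note that $E_{\min}$ and $E_{\max}$ depend only on $(X,\sigma)$, and that the interval $[E_{\min},E_{\max}]$ is closed under intersection and union.
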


\begin{corollary}
  If the set of colored vertices $(X,\sigma)$ has no sicors, then every
  properly colored digraph on $(X,\sigma)$ is a BMG. If every vertex in $X$
  is a sicor, then there is exactly one BMG on $(X,\sigma)$, namely the
  digraph containing both arcs $(x,y)$ and $(y,x)$ for every pair of
  distinct vertices $x,y\in X$.
\end{corollary}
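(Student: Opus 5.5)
Both parts of the corollary follow from Proposition~\ref{prop:BMG-interval}. I characterize BMGs on $(X,\sigma)$ as the arc sets $E$ with $E_{\min}\subseteq E\subseteq E_{\max}$, and then compute $E_{\min}$ and $E_{\max}$ in the two extreme situations.

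\emph{First claim.} Suppose $(X,\sigma)$ has no sicors. Then no $y\in X$ is the unique vertex of its color, so the defining condition of $E_{\min}$ is never satisfied and $E_{\min}=\emptyset$. A digraph on $(X,\sigma)$ is properly colored exactly when its arc set satisfies $E\subseteq E_{\max}$. Hence every properly colored digraph satisfies $\emptyset=E_{\min}\subseteq E\subseteq E_{\max}$, and Proposition~\ref{prop:BMG-interval} shows it is a BMG.

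\emph{Second claim.} Suppose every vertex of $X$ is a sicor. Then $\sigma$ is injective, so $\sigma(x)\neq\sigma(y)$ holds if and only if $x\neq y$. This gives
\[
E_{\max}=\{(x,y)\in X\times X\mid x\neq y\}.
\]
Every $y$ is a sicor, so $E_{\min}$ is the same set. Thus $E_{\min}=E_{\max}$. By Proposition~\ref{prop:BMG-interval}, a colored digraph on $(X,\sigma)$ is a BMG if and only if $E=E_{\max}$. So there is exactly one BMG, namely the digraph with both arcs $(x,y)$ and $(y,x)$ for all distinct $x,y\in X$.

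The standing assumption $|X|\ge 2$ and $|\sigma(X)|\ge 2$ guarantees that the proposition applies; no other step needs care.
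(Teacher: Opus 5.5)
Your proof is correct and is the paper's argument: both reduce the corollary to Proposition~\ref{prop:BMG-interval} by noting that $E_{\min}=\emptyset$ when there are no sicors and $E_{\min}=E_{\max}$ when every vertex is a sicor. Your explicit computation of $E_{\max}$ from the injectivity of $\sigma$ just spells out a detail the paper leaves implicit.
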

\begin{proof}
This follows from Proposition~\ref{prop:BMG-interval} and the facts that, 
if there are no sicors, then $E_{\min}=\emptyset$, and 
if every vertex is a sicor, then $E_{\min}=E_{\max}$. \qed
\end{proof}

The latter result allows us to count the number of BMGs with a fixed
coloring.
\begin{corollary}
\label{cor:number-BMGs-fixed-coloring}
Let $(X,\sigma)$ be a fixed set of colored vertices with $n=|X|$ and $h$
sicors, and let
$M \coloneqq \left| \left\{ \{x,y\}\subseteq X \;\middle|\;
    \sigma(x)\neq\sigma(y) \right\} \right|$. Then there are
$2^{2M-h(n-1)}$ BMGs on $(X,\sigma)$. 
\end{corollary}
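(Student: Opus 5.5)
By Proposition~\ref{prop:BMG-interval}, the BMGs on $(X,\sigma)$ are exactly the digraphs $(X,E)$ with $E_{\min}\subseteq E\subseteq E_{\max}$. Such an $E$ is determined by which arcs of $E_{\max}\setminus E_{\min}$ it contains, and every subset of $E_{\max}\setminus E_{\min}$ gives one. So the number of BMGs is $2^{|E_{\max}\setminus E_{\min}|}$. Before using this we must check $E_{\min}\subseteq E_{\max}$: if $y$ is a sicor and $x\neq y$, then $\sigma(x)\neq\sigma(y)$ by the definition of a sicor, so $(x,y)\in E_{\max}$. Hence $|E_{\max}\setminus E_{\min}|=|E_{\max}|-|E_{\min}|$, and it remains to count the two sets.

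\emph{Counting $E_{\max}$.} Each unordered pair $\{x,y\}$ with $\sigma(x)\neq\sigma(y)$ consists of two distinct vertices. It therefore gives exactly two ordered pairs $(x,y)$ and $(y,x)$, and both lie in $E_{\max}$. Conversely, every arc of $E_{\max}$ arises in this way from exactly one such unordered pair. So $|E_{\max}|=2M$.

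\emph{Counting $E_{\min}$.} An arc of $E_{\min}$ is a pair $(x,y)$ with $y$ one of the $h$ sicors and $x$ one of the $n-1$ vertices of $X\setminus\{y\}$. Different choices of $(x,y)$ give different arcs, and arcs with different heads are distinct. So $|E_{\min}|=h(n-1)$, with no double counting even when $x$ is itself a sicor.

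Combining the two counts gives $|E_{\max}\setminus E_{\min}|=2M-h(n-1)$, so there are $2^{2M-h(n-1)}$ BMGs on $(X,\sigma)$. The only delicate point is the inclusion $E_{\min}\subseteq E_{\max}$ together with the exact count of $E_{\min}$; the rest is direct.
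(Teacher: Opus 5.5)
Your proposal is correct and follows essentially the same route as the paper: apply Proposition~\ref{prop:BMG-interval}, count $|E_{\max}|=2M$ and $|E_{\min}|=h(n-1)$ (the paper does the latter via the disjoint sets $F_y$ of forced arcs into each sicor), and take $2^{|E_{\max}\setminus E_{\min}|}$. Your explicit check that $E_{\min}\subseteq E_{\max}$ is a small point the paper uses only implicitly.
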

\begin{proof}
  There are exactly $2M$ possible arcs between differently colored
  vertices, and hence $ |E_{\max}|=2M$. For every sicor $y$, the forced
  arcs are precisely $F_y\coloneqq\{(x,y)\mid x\in X\setminus\{y\}\}$ and
  hence $|F_y|=n-1$. Moreover, if $y$ and $z$ are distinct sicors, then
  $F_y\cap F_z=\emptyset$ must hold.  Consequently, $|E_{\min}|=h(n-1)$.
  By Proposition~\ref{prop:BMG-interval}, every subset of
  $E_{\max}\setminus E_{\min}$ can be chosen independently, resulting in
  $ 2^{|E_{\max}\setminus E_{\min}|} = 2^{\,2M-h(n-1)} $ distinct BMGs.
  \qed
\end{proof}

Theorem~\ref{thm:char-LCA-BMGs} furthermore implies:
\begin{corollary}
  \label{cor:two-color-BMG}
  Let $(G=(X,E),\sigma)$ be a colored digraph. Set 
  $X_s \coloneqq \{x \in X \mid \sigma(x) = s\}$ and denote by $G_{st}$
  the subgraph of $G$ induced by $X_s \cup X_t$ and let $\sigma_{st}$ be
  the restriction of $\sigma$ to $X_s \cup X_t$. Then $(G,\sigma)$ is a BMG
  if and only if, for all distinct colors $s,t\in\sigma(X)$, the
  induced colored subgraph $(G_{st},\sigma_{st})$ is a BMG.
\end{corollary}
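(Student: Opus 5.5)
The plan is to reduce everything to the two conditions of Theorem~\ref{thm:char-LCA-BMGs}: $(G,\sigma)$ is a BMG if and only if it is properly colored and has the sicor-in-hub property. The same theorem applies to each $(G_{st},\sigma_{st})$. First I check that the standing assumptions hold for these subgraphs. For distinct $s,t\in\sigma(X)$, the set $X_s\cup X_t$ contains at least two vertices and exactly two colors. So it suffices to show two equivalences. First, $(G,\sigma)$ is properly colored if and only if every $(G_{st},\sigma_{st})$ is properly colored. Second, $(G,\sigma)$ has the sicor-in-hub property if and only if every $(G_{st},\sigma_{st})$ does.

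The key observation is that sicors are preserved in both directions. If $y\in X_s\cup X_t$ with $\sigma(y)=s$, then the vertices of color $s$ in $X_s\cup X_t$ are exactly $X_s$. Hence $y$ is a sicor of $(X_s\cup X_t,\sigma_{st})$ if and only if $|X_s|=1$, that is, if and only if $y$ is a sicor of $(X,\sigma)$.

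\emph{Proper coloring.} Since each $G_{st}$ is an induced subgraph, proper coloring of $G$ clearly passes to every $G_{st}$. Conversely, suppose $(x,y)\in E$ with $\sigma(x)=\sigma(y)=s$. Because $|\sigma(X)|\ge2$, there is a color $t\neq s$. Then $x,y\in X_s\subseteq X_s\cup X_t$, so $(x,y)$ is an arc of $G_{st}$, which is therefore not properly colored. This is the only place where the assumption of at least two colors is used.

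\emph{Sicor-in-hub property.} Assume $G$ has the property. Let $y$ be a sicor of $(G_{st},\sigma_{st})$ and let $x\in(X_s\cup X_t)\setminus\{y\}$. By the observation above, $y$ is a sicor of $(X,\sigma)$, so $(x,y)\in E$. Since $G_{st}$ is induced, $(x,y)\in E(G_{st})$.

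For the converse, let $y$ be a sicor of $(X,\sigma)$ with $\sigma(y)=s$, and let $x\neq y$. Then $\sigma(x)=t\neq s$, because $y$ is the only vertex of color $s$. So $x,y\in X_s\cup X_t$, and $y$ is a sicor of $(G_{st},\sigma_{st})$. The hypothesis on $G_{st}$ then gives $(x,y)\in E(G_{st})\subseteq E$. Combining both equivalences with Theorem~\ref{thm:char-LCA-BMGs} yields the claim.

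No step is a real obstacle. The only points requiring care are the two preservation facts: that a monochromatic arc is caught by some $G_{st}$, which needs a second color, and that sicor status is identical in $X$ and in $X_s\cup X_t$.
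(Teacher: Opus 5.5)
Your proof is correct and follows essentially the same route as the paper: you reduce via Theorem~\ref{thm:char-LCA-BMGs} to proper coloring and the sicor-in-hub property, and use that sicor status is the same in $(X,\sigma)$ and in $(X_s\cup X_t,\sigma_{st})$. You are slightly more explicit than the paper in two places: in the converse for proper coloring, you use a second color $t\neq s$ so that a monochromatic arc lies in some $G_{st}$, and you check that each $(G_{st},\sigma_{st})$ satisfies the standing assumptions.
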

\begin{proof}
  Suppose first that $(G,\sigma)$ is a BMG. By
  Theorem~\ref{thm:char-LCA-BMGs}, $(G,\sigma)$ is properly colored and has
  the sicor-in-hub property. Let $s,t\in\sigma(X)$ be distinct. Clearly,
  $(G_{st},\sigma_{st})$ is properly colored. Moreover, a vertex $y\in X_t$
  is a sicor in $G_{st}$ if and only if $y$ is a sicor of color $t$ in
  $G$. Since $(G,\sigma)$ has the sicor-in-hub property, $(x,y)\in E(G)$
  for every $x\in X_s$. Thus, $(x,y)\in E(G_{st})$ for every $x\in
  X_s$. The same argument applies with $s$ and $t$ interchanged. Hence,
  $(G_{st},\sigma_{st})$ has the sicor-in-hub property and is therefore a
  BMG.
  
  Conversely, suppose that $(G_{st},\sigma_{st})$ is a BMG for every pair
  of distinct colors $s,t$. By Theorem~\ref{thm:char-LCA-BMGs}, every
  $(G_{st},\sigma_{st})$ is properly colored and has the sicor-in-hub
  property.  Hence, $(G,\sigma)$ is properly colored.  Now let $y\in X$ be
  a sicor with color $t$ and let $x\in X\setminus\{y\}$ with color
  $s\neq t$.  Since $X_t=\{y\}$, the vertex $y$ is also a sicor in
  $G_{st}$. The sicor-in-hub property of $(G_{st},\sigma_{st})$ implies
  $(x,y)\in E(G_{st})$, and hence $(x,y)\in E(G)$. Thus, $(G,\sigma)$ has
  the sicor-in-hub property.  Application of
  Theorem~\ref{thm:char-LCA-BMGs} shows that $(G,\sigma)$ is a BMG.
  \qed
\end{proof}

Corollary~\ref{cor:two-color-BMG} should not be mistaken for closure
under arbitrary induced subgraphs. In fact, the class of BMGs is not
hereditary with respect to vertex deletion. To see this, consider two color
classes $X_s=\{a_1,a_2\}$ and $X_t=\{b_1,b_2\}$.  Since neither color class
contains a sicor, every properly colored digraph on these four vertices
satisfies the sicor-in-hub property and is therefore a BMG. Choose such a
digraph with $(b_1,a_1)\notin E$.  After deleting $a_2$, the vertex $a_1$
becomes a sicor, while $(b_1,a_1)$ is still absent. Hence, the resulting
induced colored subgraph does not satisfy the sicor-in-hub property and is
not a BMG.  Thus, BMGs are closed under restriction to unions of entire
color classes, but not under arbitrary vertex-induced subgraphs.

\section{Characterization of RBMGs}
\label{sec:RBMG}

For a colored digraph $(G=(X,E),\sigma)$, let $G^{\leftrightarrow}$ denote
its \emph{reciprocal part}, i.e., the undirected graph with vertex set $X$
and $xy\in E(G^{\leftrightarrow})$ if and only if $(x,y),(y,x)\in E$.  We
call an undirected colored graph $(H,\sigma)$ a \emph{reciprocal best match
  graph (RBMG)} if $H=G^{\leftrightarrow}$ for some BMG $(G,\sigma)$, cf.\
Fig.~\ref{fig:RBMG}.
\begin{figure}
  \centering
  \includegraphics[width=\linewidth]{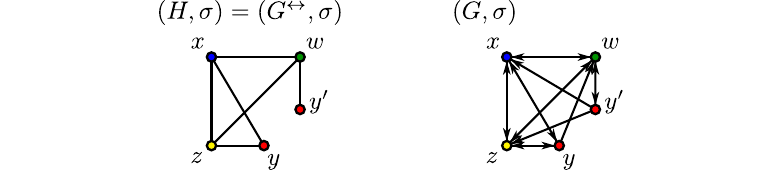}
  \caption{Shown is a RBMG $(H,\sigma)$ and a BMG $(G,\sigma)$ with
    $H = G^{\leftrightarrow}$.  In this example, $w$, $x$, and $z$ are
    sicors and $G$ is constructed from $H$ according to the procedure
    described in the proof of Theorem~\ref{thm:char-RBMG}.}
  \label{fig:RBMG}
\end{figure}

\begin{theorem}
\label{thm:char-RBMG}
The undirected colored graph $(H=(X,F),\sigma)$ is an RBMG if and only if
\begin{enumerate}
\item $(H,\sigma)$ is properly colored, and
\item any two distinct sicors of $(H,\sigma)$ are adjacent in $H$.
\end{enumerate}
Moreover, it can be decided in $O(|X|+|F|)$ time whether $(H,\sigma)$ is an
RBMG.
\end{theorem}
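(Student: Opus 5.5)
The plan is to prove both directions using Theorem~\ref{thm:char-LCA-BMGs}, so that the network setting never has to be revisited.

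For necessity, suppose $H=G^{\leftrightarrow}$ for some BMG $(G,\sigma)$. By Theorem~\ref{thm:char-LCA-BMGs}, $(G,\sigma)$ is properly colored. Every edge $xy$ of $H$ comes from arcs $(x,y),(y,x)\in E$, so $\sigma(x)\neq\sigma(y)$ and $H$ is properly colored. Note that sicors depend only on $(X,\sigma)$, so $H$ and $G$ have the same sicors. If $x,y$ are distinct sicors, the sicor-in-hub property of $(G,\sigma)$ gives $(x,y)\in E$ (because $y$ is a sicor) and $(y,x)\in E$ (because $x$ is a sicor). Hence $xy\in F$.

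For sufficiency, assume (1) and (2), and construct a digraph $G$ on $X$ with arc set
\begin{equation*}
E \coloneqq \{(x,y),(y,x)\mid xy\in F\}\cup\{(x,y)\mid y \text{ is a sicor},\ x\neq y\}.
\end{equation*}
We then check three things.
\begin{enumerate}
\item $(G,\sigma)$ is properly colored. Edges of $F$ join different colors by (1), and a sicor $y$ differs in color from every $x\neq y$.
\item $(G,\sigma)$ has the sicor-in-hub property, by construction. Theorem~\ref{thm:char-LCA-BMGs} therefore makes $(G,\sigma)$ a BMG.
\item $G^{\leftrightarrow}=H$. The inclusion $F\subseteq E(G^{\leftrightarrow})$ is clear. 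For the converse, take $(x,y),(y,x)\in E$ with $xy\notin F$. Then both arcs come from the second set, so $x$ and $y$ are both sicors. By (2) they are adjacent in $H$, which is a contradiction.
\end{enumerate}
This last step is the only place where condition (2) is used, and it is the main point to get right: the forced in-arcs at sicors must not create reciprocal pairs outside $F$. A non-sicor $x$ only receives the one-directional arc $(x,y)$ toward a sicor $y$, which is harmless.

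For the running time, build the color classes and their sizes in $O(|X|)$ and check proper coloring over all edges in $O(|F|)$. Condition (2) is equivalent to this: if $h$ is the number of sicors, each sicor has exactly $h-1$ sicor neighbours. Counting sicor neighbours takes $O(|X|+|F|)$ total, assuming $H$ is simple. Hence recognition is linear.
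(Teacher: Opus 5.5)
Your proposal is correct and follows essentially the same route as the paper. Necessity comes from the sicor-in-hub property of the underlying BMG, and sufficiency uses the same digraph $G$ with condition~2 used exactly to exclude reciprocal pairs between non-adjacent sicors; your second arc set merely re-includes arcs already contributed by $F$, and your linear-time check of $h-1$ sicor neighbours per sicor is equivalent to the paper's comparison of the number of sicor--sicor edges with $\binom{h}{2}$.
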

\begin{proof}
  Suppose first that $(H,\sigma)$ is an RBMG. Then there exists a BMG
  $(G=(X,E),\sigma)$ such that $H=G^{\leftrightarrow}$. Since $(G,\sigma)$
  is properly colored by Theorem~\ref{thm:char-LCA-BMGs}, so is
  $(H,\sigma)$.  Now let $x,y\in X$ be distinct sicors. Since $y$ is a
  sicor, the sicor-in-hub property of $(G,\sigma)$ implies $(x,y)\in
  E$. Similarly, since $x$ is a sicor, we have $(y,x)\in E$.  Hence,
  $xy\in E(H)$ and every two distinct sicors are adjacent in $H$.

  Conversely, suppose that $(H=(X,F),\sigma)$ is properly colored and that
  any two distinct sicors are adjacent in $H$.  We construct a digraph
  $G=(X,E)$ as follows.  For every edge $xy\in F$, include both arcs
  $(x,y)$ and $(y,x)$ in $E$.  Moreover, for every sicor $y$ and every
  $x\in X\setminus\{y\}$ with $xy\notin F$, include the arc $(x,y)$.  There
  are no further arcs.  Since $H$ is properly colored, all arcs introduced
  in this way join vertices of distinct colors.  Hence, $(G,\sigma)$ is
  properly colored.  By construction, every sicor $y$ receives an arc from
  every $x\neq y$, and therefore $(G,\sigma)$ has the sicor-in-hub
  property.  Theorem~\ref{thm:char-LCA-BMGs} implies that $(G,\sigma)$ is a
  BMG.  It remains to verify that $H=G^{\leftrightarrow}$.  Every edge
  $xy\in F$ gives rise to both arcs $(x,y)$ and $(y,x)$ and therefore
  $xy\in E(G^{\leftrightarrow})$.  Conversely, suppose that $xy\notin F$.
  Hence, both $(x,y)$ and $(y,x)$ could belong to $E$ only if both $x$ and
  $y$ were sicors.  This is impossible, since distinct sicors are adjacent
  in $H$ by assumption.  Thus, at most one of $(x,y)$ and $(y,x)$ belongs
  to $E$, and consequently $xy\notin E(G^{\leftrightarrow})$.  Therefore,
  $H=G^{\leftrightarrow}$, and $(H,\sigma)$ is an RBMG.

  For the recognition algorithm, first determine all color-class sizes by
  scanning the vertices once.  This identifies all sicors in $O(|X|)$
  time. Next, scan every edge $xy\in F$ once. During this scan, verify that
  $\sigma(x)\neq\sigma(y)$ and count the number of edges whose two
  endpoints are sicors.  Thus, proper coloring and the number of edges
  induced by the sicors can be determined in $O(|X|+|F|)$ time. If there
  are $h$ sicors, they induce a clique if and only if the number of edges
  between them is $\binom{h}{2}$. Hence, Theorem~\ref{thm:char-RBMG} yields
  an $O(|X|+|F|)$-time recognition algorithm.
  \qed
\end{proof}

The proof of Theorem~\ref{thm:char-RBMG} is constructive.
In particular, every RBMG $(H,\sigma)$ admits a BMG $(G,\sigma)$ with
$H=G^{\leftrightarrow}$ that can be constructed in $O(|X|^2)$ time.
We conclude our discussion of RBMGs with two simple consequences.

\begin{corollary}
  \label{cor:RBMG-1}
  Let $(H=(X,F),\sigma)$ be an undirected colored graph.  Then $(H,\sigma)$
  is an RBMG if and only if $F_{\min}\subseteq F\subseteq F_{\max}$, where
  $F_{\max}\coloneqq\{\{x,y\}\subseteq X \mid \sigma(x)\neq\sigma(y)\}$ and
  $F_{\min}\coloneqq\{\{x,y\}\subseteq X\mid x\text{ and }y\text{ are
    distinct sicors}\}$.
\end{corollary}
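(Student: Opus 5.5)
The plan is to reduce Corollary~\ref{cor:RBMG-1} directly to Theorem~\ref{thm:char-RBMG}, in the same way that Proposition~\ref{prop:BMG-interval} was obtained from Theorem~\ref{thm:char-LCA-BMGs}. That theorem states that $(H,\sigma)$ is an RBMG if and only if (1) it is properly colored and (2) any two distinct sicors are adjacent. It therefore suffices to translate each condition into one of the two set inclusions.

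First, we show that condition (1) is equivalent to $F\subseteq F_{\max}$. Here $H$ is an undirected graph, so its edges are $2$-element subsets $\{x,y\}$ with $x\neq y$. Proper coloring means precisely that $\sigma(x)\neq\sigma(y)$ for every $\{x,y\}\in F$, which is membership in $F_{\max}$. Second, we show that condition (2) is equivalent to $F_{\min}\subseteq F$. By definition, $F_{\min}$ is the set of unordered pairs of distinct sicors, and condition (2) asserts that every such pair is an edge of $H$.

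Combining the two equivalences with Theorem~\ref{thm:char-RBMG} gives the claim. One point needs a brief check so that the interval is well defined: $F_{\min}\subseteq F_{\max}$. This holds because distinct sicors necessarily carry distinct colors, since each is the unique vertex of its color. No step is a real obstacle. The only care needed is to note that the sicor notion depends on $(X,\sigma)$ alone and not on $F$, so $F_{\min}$ is fixed independently of the graph.
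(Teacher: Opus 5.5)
Your proposal is correct and follows essentially the same route as the paper: apply Theorem~\ref{thm:char-RBMG}, then identify proper coloring with $F\subseteq F_{\max}$ and pairwise adjacency of the sicors with $F_{\min}\subseteq F$. Your extra remarks, that $F_{\min}\subseteq F_{\max}$ and that the sicors depend only on $(X,\sigma)$, are accurate but not needed for the equivalence.
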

\begin{proof}
  By Theorem~\ref{thm:char-RBMG}, $(H,\sigma)$ is an RBMG if and only if it
  is properly colored and the sicors induce a clique.  Proper coloring is
  equivalent to $F\subseteq F_{\max}$, while the condition that the sicors
  induce a clique is equivalent to $F_{\min}\subseteq F$.  Hence,
  $(H,\sigma)$ is an RBMG if and only if
  $F_{\min}\subseteq F\subseteq F_{\max}$. \qed
\end{proof}

\begin{corollary}
  \label{cor:RBMG-2}
  Let $(X,\sigma)$ be a fixed colored set. Moreover, let
  $M\coloneqq|\{\{x,y\}\subseteq X\mid \sigma(x)\neq\sigma(y)\}|$ and let
  $h$ denote the number of sicors.  The number of RBMGs on the fixed
  colored vertex set $(X,\sigma)$ is $2^{M-\binom{h}{2}}$.
\end{corollary}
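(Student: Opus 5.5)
The plan is to count via Corollary~\ref{cor:RBMG-1}, in the same way Corollary~\ref{cor:number-BMGs-fixed-coloring} was derived from Proposition~\ref{prop:BMG-interval}. For the fixed colored set $(X,\sigma)$, the RBMGs correspond exactly to the edge sets $F$ with $F_{\min}\subseteq F\subseteq F_{\max}$. An undirected colored graph on $(X,\sigma)$ is determined by its edge set, so distinct such $F$ give distinct RBMGs. Hence the number of RBMGs is $2^{|F_{\max}\setminus F_{\min}|}$, because each edge in $F_{\max}\setminus F_{\min}$ can be included or omitted independently.

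It then remains to compute the two cardinalities.
\begin{itemize}
\item By definition, $|F_{\max}|=M$.
\item Every two distinct sicors carry different colors, since each sicor is the unique vertex of its color. So every unordered pair of distinct sicors is bichromatic, and $F_{\min}\subseteq F_{\max}$.
\item There are $h$ sicors, so $|F_{\min}|=\binom{h}{2}$.
\end{itemize}
Together these give $|F_{\max}\setminus F_{\min}|=M-\binom{h}{2}$, and the count is $2^{M-\binom{h}{2}}$.

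The only step needing care is the inclusion $F_{\min}\subseteq F_{\max}$, which makes the subtraction valid and the interval nonempty. It follows immediately from the fact that each sicor is alone in its color class. Everything else is a direct application of Corollary~\ref{cor:RBMG-1}.
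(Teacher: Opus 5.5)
Your proposal is correct and follows the paper's proof essentially verbatim. Both count the interval $F_{\min}\subseteq F\subseteq F_{\max}$ from Corollary~\ref{cor:RBMG-1}, use that distinct sicors have distinct colors to get $F_{\min}\subseteq F_{\max}$ and $|F_{\min}|=\binom{h}{2}$, and conclude with $2^{M-\binom{h}{2}}$.
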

\begin{proof}
  There are exactly $M$ possible bichromatic edges, and therefore
  $|F_{\max}|=M$.  Since any two distinct sicors have distinct colors,
  every unordered pair of sicors belongs to $F_{\max}$, and thus
  $|F_{\min}|=\binom{h}{2}$.  Every edge in $F_{\max}\setminus F_{\min}$
  can be chosen independently. The corollary now follows from
  $|F_{\max}\setminus F_{\min}| =M-\binom{h}{2}$.  \qed
\end{proof}

\section{Concluding Remarks and Outlook}

Our characterization of best match graphs shows that, when arbitrary rooted
networks are admitted as explaining structures, the best-match condition
considered here imposes surprisingly few restrictions on the resulting
digraphs. Apart from proper coloring, the only global constraint is induced
by colors represented by a single vertex. Thus, from a graph-theoretic
point of view, the resulting class of BMGs is rather broad.  From an
applications point of view, furthermore, the notion of best matches
employed in this contribution does not imply the existence of a best match
of each gene in every other species. Both considerations suggest to explore
alternative definitions of best matches.

For example, we may define a leaf $y\in X$ as a \emph{weak best match} of
$x\in X$ if $\sigma(x)\neq\sigma(y)$ and
$\LCA(x,y)\cap Q(x,\sigma(y))\neq\emptyset$, i.e., at least one LCA of $x$
and $y$ is $\preceq$-minimal among all the LCAs of $x$ and leaves with the
same color as $y$. It follows immediately from the definition of
$Q(x,s)$ and the fact that $\LCA(x,y)\ne\emptyset$ for all $x,y\in X$
that $x\in X$ has at least one best match in each color class, i.e., the
corresponding ``weak best match graphs'' are color-sink-free. Moreover,
BMGs and their weak counterparts coincide if the explaining networks are
restricted to trees.

It will also be interesting to explore the (weak) BMGs explained by
different classes of networks interpolating between rooted trees studied
in \cite{Geiss2019} and arbitrary rooted networks considered
here. Interesting candidates are in particular classes of networks that
guarantee unique least common ancestors, see
\cite{Hellmuth:23a,Lindeberg2025}. We can expect that such restrictions
on the explaining network will enforce non-trivial structural properties on
the resulting BMGs and likely exhibit a much closer connection between the
combinatorics of the digraph and the topology of its explaining network.

In applications one is most commonly interested in parsimonious
explanations. The construction of explanations for BMGs via BOP networks,
however, produces dense networks also for BMGs that can be explained by
trees. It is therefore natural to ask to what extent an explaining
network can be simplified while preserving the induced BMG. This includes
identifying redundant vertices and arcs, finding minimal or canonical
explaining networks, and determining the computational complexity of
obtaining such explanations. These questions may also provide a more
informative link between the structure of a BMG and that of networks
explaining it.

\subsection*{Acknowledgements}

Research in the Stadler lab is supported by the Federal Ministry of
Research, Technology, and Space of Germany through DAAD project 57616814
(SECAI, School of Embedded Composite AI), jointly with S{\"a}chsisches
Staatsministerium für Wissenschaft, Kultur und Tourismus in the programme
Center of Excellence for AI-research \emph{Center for Scalable Data
Analytics and Artificial Intelligence Dresden/Leipzig} (ScaDS.AI, proj.\
no.\ SCADS24B), and by the Deutsche Forschungsgemeinschaft (DFG, German
Research Foundation) under Germany's Excellence Strategy (EXC-3105/1,
533765739).

\bibliography{B1_netBMG}

%\clearpage

\end{document}